\newtheorem{thm}{Theorem}[section]
\newtheorem{lem}[thm]{Lemma}
\newtheorem{Def}{Definition}[section]
\newtheorem{cor}[thm]{Corollary}
\newcommand\bbP{{\mathbb P}}
\newcommand\bbR{{\mathbb R}}
\newcommand\bbZ{{\mathbb{Z}}}
\DeclareMathOperator{\PD}{PD}
\renewcommand\S{\Sigma}
\renewcommand\d{\partial}
\newcommand\D{\nabla}
\newcommand\g{\gamma}
\renewcommand\th{\theta}
\newcommand{\bdry}{\partial}
\newcommand{\rk}{{\rm rk}}
\newcommand{\incl}{{\it incl}}
\newcommand\vs{\vspace}
\newcommand\beq{\begin{eqnarray}}
\newcommand\eeq{\end{eqnarray}}
\newcommand\ben{\begin{enumerate}}
\newcommand\een{\end{enumerate}}
\newcommand\bit{\begin{itemize}}
\newcommand\eit{\end{itemize}}
\DeclareFontFamily{OT1}{rsfs}{} \DeclareFontShape{OT1}{rsfs}{m}{n}{ <-7> rsfs5 <7-10> rsfs7 <10-> rsfs10}{}
\DeclareMathAlphabet{\mycal}{OT1}{rsfs}{m}{n}
\newcounter{mnotecount}[section]
\title{On the topology of initial data sets \\with higher genus ends}
\author{Kenneth L. Baker}
\author{Gregory J. Galloway}
\affil{Department of Mathematics \\University of Miami, Coral Gables, FL 33124}
\begin{document}

\date{}
\maketitle

\begin{abstract} 
In this note we study the topology of $3$--dimensional initial data sets with horizons  of a sort associated with asymptotically locally anti-de Sitter spacetimes.  We show that, within this class, those initial data sets which contain no (immersed) marginally outer trapped surfaces in their interior must have simple topology: they are a product of a surface and an interval, or a mild variation thereof, depending on the connectedness of the horizon and on its genus relative to that of the end.  The results obtained here extend results in \cite{EGP} to the case of higher genus ends. 
\end{abstract}

\section{Introduction}

One of the interesting features of general relativity is that it does not a priori impose any restrictions on the topology of space.
In fact, as was shown in \cite{IMP}, given an asymptotically flat initial data set of arbitrary topology, there always exists a solution to the vacuum Einstein constraint equations.  However, according to the principle of topological censorship, the topology of the domain of outer communications (DOC), i.e.\ the region outside of all black holes (and white holes), should, in a certain sense, be simple.  Roughly speaking, results on topological censorship \cite{FSW, Gdoc, GSWW, CGS} show that, under suitable energy and causality conditions, the topology of the DOC (at the fundamental group level) cannot be more complicated than the topology at infinity.  In particular, in the asymptotically Minkowskian case, the DOC must be simply connected.   However, the results alluded to here are spacetime results, i.e., they involve conditions that are essentially global in time.  In \cite{EGP} a result on topological censorship was obtained at the pure initial data level for asymptotically flat initial data sets, thereby circumventing  difficult questions of global evolution; cf. \cite[Theorem 5.1]{EGP}.

The aim of this note is to extend this result 
to $3$-dimensional initial data sets which arise in asymptotically locally anti-de Sitter (ALADS) spacetimes. Examples of such are the generalized Kottler  spacetimes \cite{Brill, Mann, CS}, which are solutions to the vacuum Einstein equations with negative cosmological constant.  These solutions have spacetime ends and event horizons with arbitrary (but equal) genus.  Cauchy surfaces for the DOC have product topology.  Here we will establish conditions  on ALADS initial data sets, similar to those in \cite{EGP}, which imply product topology.   The proofs rely on existence results for marginally outer trapped surfaces, as well as our current understanding of 
$3$--manifolds.   Results in \cite{CLR} will play a key role in establishing product topology. 

In Section \ref{prelim} we present some preliminary material.  Our main results are presented in Section \ref{results}.  

\section{Preliminaries}
\label{prelim}

We begin with some basic definitions and important facts about $3$--manifolds.  Let $V$ be a compact $3$--manifold without spherical boundary components.\footnote{Here and elsewhere we assume that boundary components (which physically correspond to horizons or ends) are of genus 
$\ge 1$.  The case of spherical ends is considered in \cite{EGP}.}
$V$ is said to be {\it irreducible} provided every embedded $2$--sphere in $V$ bounds a ball  in $V$.

A surface $\S$ embedded in $V$ is said to be {\em compressible} if there exists an embedded disk $D \subset V$ such that $D \cap \S = \bdry D$ and $\bdry D$ does not bound a disk in $\S$; such a disk is a {\em compressing disk} for $\S$.    A surface that is not compressible is {\em incompressible}.
It is a fundamental consequence of the Loop Theorem \cite{Papakyriakopoulos} that $\S$ is incompressible if and only if the map on fundamental groups $i_* \colon \pi_1(\S) \to \pi_1(V)$ induced by inclusion $i \colon \S \hookrightarrow V$ is injective. 

If $\S$ is a compressible surface in $V$ and $D$ is a compressing disk for $\S$, then $D$ has a collar neighborhood $N(D) \cong D^2 \times (0,1)$ such that $\S \cap N(D)  \cong \bdry D \times (0,1)$, and  the {\em compression} of $\S$ along $D$ is the surface $\S' = (\S \cup \bdry N(D)) - (\S \cap N(D))$.  If moreover $\S$ is a component of $\bdry V$, then this compression produces the submanifold $V' = V - N(D)$.  Observe that $V$ may be recovered from $V'$ by attaching a $3$--dimensional $1$--handle along $\S'$; the compressing disk is its {\it co-core}.

From Theorem 10.5 in \cite{Hempel}, together with the positive resolution of the Poincar\'e Conjecture (which ensures that there are no fake $3$-cells),
we have the following algebraic criterion for $V$ to be an $I$--bundle (where $I$ is the interval $[0,1]$).

\begin{thm}\label{thm:product} 
Let $V$ be a compact, connected, orientable $3$--manifold without spherical boundary components, and $\S$ an incompressible boundary component.  If the index, $[\pi_1(V) : i_* \pi_1(\S)]$, of $i_* \pi_1(\S)$ in $\pi_1(V)$ is finite, then either 
\begin{itemize}
\item  $[\pi_1(V) : i_* \pi_1(\S)]=1$ and $V$ is diffeomorphic to $
\S \times [0,1]$ with $\S = \S \times \{0\}$, or
\item  $[\pi_1(V) : i_* \pi_1(\S)]=2$ and $V$ is a twisted $I$--bundle over a compact non-orientable surface $\bar{\S}$ with $\S$ the associated $0$--sphere bundle.
\end{itemize}
\end{thm}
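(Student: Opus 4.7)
The plan is to invoke Hempel's Theorem 10.5 as a black box to conclude that $V$ is an $I$--bundle over a compact surface, and then to read off the two alternatives directly from that $I$--bundle structure. Hempel's theorem is stated for irreducible manifolds, so the first task is to eliminate sphere summands, and this is exactly where the Poincar\'e Conjecture enters.

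First I would reduce to the case that $V$ is irreducible. Consider the prime decomposition $V = V_1 \# \cdots \# V_k$. By the Poincar\'e Conjecture every simply-connected prime summand is $S^3$ and may be discarded. Since $\Sigma$ has genus at least one, the surface group $i_*\pi_1(\Sigma)$ is one-ended, and a group possessing a one-ended finite-index subgroup is itself one-ended; hence $\pi_1(V) \cong \pi_1(V_1) * \cdots * \pi_1(V_k)$ must be one-ended. This forbids every non-trivial free product decomposition, so at most one summand has non-trivial fundamental group and in particular no summand is $S^1 \times S^2$. Therefore $V$ is prime, and since $\bdry V \ne \emptyset$, prime implies irreducible.

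Now Hempel's Theorem 10.5 applies to $V$ with distinguished incompressible boundary component $\Sigma$, and yields that $V$ is the total space of an $I$--bundle over a compact surface $F$ with $\Sigma$ a boundary component. If the bundle is trivial then $V \cong F \times [0,1]$, whose boundary is two disjoint copies of $F$, so $\Sigma = F$ and $V \cong \Sigma \times [0,1]$ with $i_*$ an isomorphism, giving index $1$ as in the first alternative. If the bundle is twisted then orientability of $V$ forces the base $\bar{\Sigma} := F$ to be non-orientable; $V$ deformation retracts onto $\bar{\Sigma}$, and its boundary is a single component, namely the associated $0$--sphere bundle, i.e. the orientation double cover of $\bar{\Sigma}$, which must therefore equal $\Sigma$. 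The inclusion $\Sigma \hookrightarrow V$ then realizes $i_*\pi_1(\Sigma)$ as the index-two orientation subgroup of $\pi_1(\bar{\Sigma}) = \pi_1(V)$, giving the second alternative.

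The main obstacle is really just Hempel's Theorem 10.5 itself, which encapsulates the substantive 3--manifold work (a hierarchy argument that terminates thanks to the finite-index hypothesis, together with a Stallings-type product recognition). Everything else in the argument is bookkeeping: the prime decomposition plus the Poincar\'e Conjecture to obtain irreducibility, and a routine case split on the two kinds of $I$--bundle.
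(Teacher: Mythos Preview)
Your proposal is correct and follows the paper's approach: the paper does not give a proof at all but simply attributes the statement to Hempel's Theorem~10.5 together with the Poincar\'e Conjecture, and you have filled in exactly those details. The only cosmetic difference is in how the Poincar\'e Conjecture is packaged: the paper phrases it as ``ensures that there are no fake $3$--cells'' (Hempel's conclusion is stated modulo such cells), whereas you first pass through the prime decomposition and a one-endedness argument to obtain irreducibility before invoking Hempel; both routes use the same two ingredients to the same effect.
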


A group $G$ is
said to be {\it residually finite} if for each non-identity element $g \in G$, there is a normal subgroup
$N$ of finite index such that $g \notin N$.  It follows from work of Hempel \cite{Hempel2}, together with the positive resolution of the geometrization conjecture, that the fundamental group of every compact $3$--manifold $V$ is residually finite; see e.g.\ \cite{afw-survey}. Hence, by basic covering space theory, if 
$\pi_1(V) \ne 0$ then $V$ admits a finite cover $\tilde{V}$. (If $\pi_1(V)$ were finite, one could simply take the universal cover.  Residual finiteness becomes important when $\pi_1(V)$ is infinite.)

\medskip

The following lemma about submanifolds representing codimension one homology classes is well established for closed orientable manifolds; see \cite{Thom}.  When the ambient manifold has boundary the result appears well-known to the experts,  though we were unable to find an explicit statement or proof in the literature.  Hence we give it here.  The proof is a modification of the proof given in \cite[VI 11.16]{Bredon} for closed manifolds.

\begin{lem}\label{lem:realize}
Let $M$ be a smooth, compact, connected, oriented $n$--manifold with boundary.  Then every homology class in $H_{n-1}(M;\bbZ)$ is realized by a closed, smooth, oriented $(n-1)$--submanifold.
\end{lem}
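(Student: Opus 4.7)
The plan is to mimic the standard Thom-style argument from \cite[VI 11.16]{Bredon}, but to replace Poincar\'e duality with Poincar\'e--Lefschetz duality so that the boundary can be handled cleanly. Throughout, $M$ is a smooth compact connected oriented $n$-manifold with boundary.

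First I would apply Lefschetz duality to obtain a natural isomorphism
\[
\PD \colon H^1(M,\bdry M;\bbZ) \xrightarrow{\;\cong\;} H_{n-1}(M;\bbZ),
\]
so that giving an $(n-1)$-homology class in $M$ is equivalent to specifying a relative cohomology class $\alpha \in H^1(M,\bdry M;\bbZ)$. Next, since $S^1$ is a $K(\bbZ,1)$ and $(M,\bdry M)$ is a good (CW) pair (use a collar neighborhood of $\bdry M$), there is a bijection
\[
H^1(M,\bdry M;\bbZ) \;\cong\; \bigl[\,(M,\bdry M),\,(S^1,*)\,\bigr],
\]
natural in the pair, where $* \in S^1$ is a chosen basepoint. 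Thus $\alpha$ is realized as $f^*(\iota)$ for some continuous map $f \colon (M,\bdry M) \to (S^1,*)$, where $\iota$ generates $H^1(S^1,*;\bbZ)$.

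Second, I would use relative smooth approximation to replace $f$ by a smooth map, still sending $\bdry M$ to $*$; this is standard for maps from smooth manifolds to smooth manifolds, and the collar of $\bdry M$ guarantees the approximation can be performed without disturbing the boundary condition. By Sard's theorem, pick a regular value $p \in S^1 \smallsetminus \{*\}$. Since $f(\bdry M) = \{*\} \not\ni p$, the preimage
\[
\S \;:=\; f^{-1}(p) \;\sbst\; \mathrm{int}(M)
\]
is a closed smooth submanifold of codimension one, disjoint from $\bdry M$, and it inherits a canonical orientation from those of $M$ and $S^1$.

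Finally I would verify that $[\S]\in H_{n-1}(M;\bbZ)$ is exactly $\PD(\alpha)$, hence represents the original class. This is the standard naturality computation: the relative Thom class of the normal bundle of $\{p\}$ in $S^1$ pulls back under $f$ to a Thom class of the normal bundle of $\S$, which under $\PD$ corresponds to $[\S]$; on the cohomology side, this pullback is $f^*(\iota)=\alpha$. I anticipate the main obstacle to be the purely technical point of arranging the smooth approximation while preserving the condition $f(\bdry M)=\{*\}$, and correctly identifying the relative version of the representability isomorphism; everything else is a direct adaptation of the closed-manifold proof in \cite{Bredon}.
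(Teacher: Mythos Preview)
Your proposal is correct and follows essentially the same route as the paper's own proof: Poincar\'e--Lefschetz duality combined with the identification $H^1(M,\bdry M;\bbZ)\cong [(M,\bdry M),(S^1,*)]$, transversality to a point $p\ne *$, and the Thom-class naturality computation to identify $[f^{-1}(p)]$ with the original class. The only differences are cosmetic---you spell out the smooth approximation and Sard step where the paper simply says ``homotope $\phi$ to be transverse to $q$''---so there is nothing substantively new to compare.
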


\begin{proof}
Recall that we have the isomorphisms
\[H_{n-1}(M;\bbZ) \xleftarrow{\PD} H^1(M,\bdry M) \xleftarrow{} [(M,\bdry M),(S^1,p)]\]
where $p$ is a point in $S^1$.  Thus for any class $z \in H_{n-1}(M;\bbZ)$, there exists a map of pairs $\phi \colon (M, \bdry M) \to (S^1, p)$ such that $z=\PD(\phi^*(\xi))$, the Poincar\'e-Lefschetz dual to $\phi^*(\xi)$ where $\xi \in  H^1(S^1,p; \bbZ)\cong H^1(S^1;\bbZ) \cong \bbZ$ is the generator such that $\xi([S^1])=1$.  Then let $q\in S^1$ be a point different from $p$ and homotope $\phi$ to be transverse to $q$.  This gives us the submanifold $\phi^{-1}(q)$ which we now show represents $z$. Since $\PD(\xi)=[q] \in H_0(S^1;\bbZ)$, $\xi$ is also the Thom class of the normal bundle of $q$ in $S^1$.  Observing that $\phi$ gives a bundle map from the normal bundle of $\phi^{-1}(q)$ in $M$ to the normal bundle of $q$ in $S^1$, $\phi^*(\xi)$ is the Thom class of the normal bundle of $\phi^{-1}(q)$.  Thus $\PD(\phi^*(\xi)) = [\phi^{-1}(q)]$, and hence $z=[\phi^{-1}(q)]$.
\end{proof}

For our application, we only need the case when $n=3$.   A much simpler, standard, low-tech argument handles this case and applies whether or not  the manifold has boundary:    In a compact, orientable $3$--manifold $M$, an integral homology class $z \in H_2(M;\bbZ)$, viewed in terms of simplicial homology, can be represented as a integral linear combination of triangles ($2$--simplicies) in a triangulation of $M$, say $z=[\sum n_i \sigma_i]$ for integers $n_i$. In a tubular neighborhood of each $2$--simplex $\sigma$, regard $n\sigma$  as $|n|$ parallel copies of a triangle with disjoint interiors, oriented according to the sign of $n$, and glued together along the edges and vertices.   Since the linear combination is a cycle, in a tubular neighborhood along the edges but outside ball neighborhoods of the vertices we can match up the triangles according to their orientations and resolve the intersections. Now the resulting complex intersects the sphere boundaries of the ball neighborhoods of the vertices in collections of disjoint circles.  Replace the complex in these balls with disjoint disks bounded by these circles.  Since these resolutions and replacements preserve homology classes, the resulting surface represents the original homology class~$z$. 

\medskip
We now consider some basic definitions and facts about marginally outer trapped surfaces in initial data sets; for background, see e.g.\ \cite{AEM}.  Let $(V, h, K)$ be a $3$--dimensional initial data set in a $4$--dimensional 
spacetime $(M,g)$, i.e., $V$ is a  spacelike hypersurface in $M$ with induced (Riemannian) metric $h$ and second fundamental form $K$. To set sign conventions, for vectors $X,Y \in T_pV$, $K$ is defined as, 
$K(X,Y) = g(\D_X u, Y)$, where $\D$ is the Levi-Civita connection of $M$ and $u$ is the future directed timelike unit normal vector field to $V$. 

Let $\S$ be a closed (compact without boundary) two-sided hypersurface in $V$.   Then $\S$ admits a smooth unit normal field
$\nu$ in $V$, unique up to sign.  By convention, refer to such a choice as outward pointing. 
Then $l_+ = u+\nu$ (resp. $l_- =  u - \nu$) is a future directed outward (resp., future directed inward) pointing null normal vector field along $\S$. 

Associated to $l_+$ and $l_-$,  are the two {\it null second fundamental forms},  $\chi_+$ and 
$\chi_-$, respectively, defined as, 
\beq
\chi_{\pm} : T_p\S \times T_p\S \to \mathbb R , \qquad \chi_{\pm}(X,Y) =  g(\D_Xl_{\pm}, Y) \,.
\eeq
The {\it null expansion scalars} (or {\it null mean curvatures})  $\th_{\pm}$ of $\S$   are obtained by tracing 
$\chi_{\pm}$ with respect to the induced metric $\g$ on $\S$,
$\theta_{\pm} = {\rm tr}_{\g} \chi_{\pm}  = {\rm div}\,_{\S} l_{\pm}$.
Physically, $\th_+$ (resp., $\th_-$) measures the divergence
of the  outgoing (resp., ingoing) light rays emanating
from $\S$. 

The null expansion scalars can be expressed solely in terms of  the initial data 
$(V,g,K)$.  We have $\th_{\pm} = {\rm tr}_{\g} K \pm H$, 
where $H$ is the mean curvature of $\S$ within $M$.   In particular, in the time-symmetric case, i.e., when $K = 0$ (and hence $V$ is totally geodesic in $(M, g)$), $\theta_+$ is just the mean curvature of $\S$ in $V$.

In regions of spacetime where the
gravitational field is strong, one may have both $\th_- < 0$
and $\th_+ < 0$, in which case $\S$ is called a {\it trapped
surface}.   The concept of a trapped surface  plays a key role in the Penrose singularity theorem.   

Focusing attention on the outward null normal only, we say that
$\S$ is an outer trapped surface  if $\th_+ < 0$.  Finally, we define $\S$ to be a marginally
outer trapped surface (or MOTS) if $\th_+$ vanishes identically.   Note that in the time-symmetric case, a MOTS is simply a minimal hypersurface in $V$.   In this sense  MOTSs may be viewed as spacetime analogues of minimal surfaces.  Physically, MOTSs represent an extreme gravitational situation.  Under appropriate energy and causality conditions,  their presence in an initial data set signals the existence of a black hole.  
Moreover, for stationary (steady state) black hole spacetimes, cross sections of the event horizon are MOTSs.

Now consider a $3$--dimensional initial data set $(V,h,K)$, where $V$ is compact with boundary.  We say that a component $\S$ of 
$\d V$ is {\it null mean convex} if it has positive outward null expansion, $\th^+ > 0$, and negative inward null expansion, $\th^-< 0$.  Note that round spheres in Euclidean slices of Minkowski space  are null mean convex.

The basic existence results for MOTSs (see \cite{AEM} and references therein) imply the following result, which will be needed in the proofs of our main results.

\begin{lem}\label{exist}
Let $(V,h,K)$ be a $3$--dimensional initial data set, where $V$ is a compact $3$--manifold with boundary.  Suppose that the boundary can be expressed as a disjoint union, $\d V = \S_0 \cup \S_1$, such that each component of $\S_0$ is a MOTS  (with respect to either the null normal whose projection points into $V$ or the null normal whose projection points out of $V$)
and each component of $\S_1$ is null mean convex.  If there are no MOTS in $V \setminus \S_0$ then 
$\S_1$ is connected.
\end{lem}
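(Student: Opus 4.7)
The plan is to argue by contradiction using the MOTS existence theorem of Andersson--Metzger (see \cite{AEM} and the references therein). Suppose $\Sigma_1$ is disconnected, and decompose $\Sigma_1 = S \sqcup S'$ into two nonempty unions of connected components; the goal is to produce a MOTS in $\operatorname{int}(V) \subseteq V \setminus \Sigma_0$, contradicting the hypothesis.

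I set up barriers as follows. Let $\nu$ be the outward unit normal to $\partial V$. On $S$, the null mean convex hypothesis gives $\theta^+(\nu) > 0$, so $S$ is a strict outer barrier. On $S'$, the hypothesis gives $\theta^-(\nu) < 0$; since reversing the spatial normal interchanges $\theta^+$ and $\theta^-$ (because the swap $\nu \mapsto -\nu$ sends $l_+ = u + \nu$ to $u - \nu = l_-$), this translates to $\theta^+(-\nu) < 0$ with respect to the inward-pointing normal, the strict inner barrier condition. Each component of $\Sigma_0$ is itself a MOTS and, by the strict maximum principle for MOTSs, serves as a weak (MOTS-type) inner barrier. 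Applying the existence theorem to $V$ with $S$ as strict outer barrier and $\Sigma_0 \cup S'$ as the inner barrier produces a smooth compact MOTS $\Sigma \subseteq V$ that separates $S$ from $\Sigma_0 \cup S'$; by the strict maximum principle, $\Sigma$ is strictly disjoint from $S$ and from $S'$.

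The crucial final observation is that such a separating MOTS cannot lie entirely in $\partial V$: a closed smooth $2$-submanifold contained in $\partial V$ would have to be a union of connected components of $\partial V$, but removing such a union does not disconnect the connected manifold $V$, contradicting the separating property of $\Sigma$. Hence $\Sigma$ has at least one connected component in $\operatorname{int}(V)$, and this component is a MOTS lying in $V \setminus \Sigma_0$, the desired contradiction.

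The main obstacle is ensuring that the invoked existence result applies with a mixed inner barrier (strict $S'$ together with the MOTS-type weak barriers in $\Sigma_0$) and still produces a MOTS with the separating property used above. This is handled by the refinements of the Andersson--Metzger existence theory discussed in \cite{AEM}, which permit MOTSs as weak barriers while preserving the separating-cobordism interpretation of the outermost MOTS; the strict maximum principle is then what guarantees the MOTS stays clear of the strict barriers $S$ and $S'$.
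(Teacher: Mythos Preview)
Your overall strategy matches the paper's, but there is a genuine gap in your barrier setup. The hypothesis allows each component of $\Sigma_0$ to be a MOTS with respect to \emph{either} the inward or the outward null normal. If a component $C \subset \Sigma_0$ satisfies $\theta^+(\nu) = 0$ for the \emph{outward} normal $\nu$, this says nothing about $\theta^+(-\nu)$, so $C$ need not satisfy the (weak) inner barrier condition $\theta^+(-\nu) \le 0$; it is a weak \emph{outer} barrier, not a weak inner one. Consequently you cannot place all of $\Sigma_0$ on the inner side as you do. The strict maximum principle does not rescue this: it compares two MOTSs defined with respect to the \emph{same} null normal direction and gives no control over $\theta^+(-\nu)$.

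The paper avoids this by sorting the boundary according to the MOTS direction: one component of $\Sigma_1$ together with those components of $\Sigma_0$ that are MOTSs for the inward null normal go on the ``in'' side, while the remaining components of $\Sigma_1$ together with the $\Sigma_0$--components that are MOTSs for the outward null normal go on the ``out'' side. With that decomposition each side genuinely satisfies the (weak) Andersson--Metzger barrier inequalities, with at least one strict component on each side supplied by $\Sigma_1$. Once the barriers are set up correctly, your concluding argument---that a separating MOTS cannot lie entirely in $\partial V$ and hence has a component in $V\setminus\Sigma_0$---is fine and is essentially what the paper uses as well.
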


%

Lemma \ref{exist} is a consequence of the basic existence result for MOTSs obtained in \cite{AM2}; cf.\,\cite[Theorem 3.1]{AM2}.  Moreover, as will be needed here, this existence result extends to the case of weak (nonstrict) barriers as described in \cite[Section 5]{AM2}.   We note also that Lemma \ref{exist} holds for initial data sets up to dimension seven, as a consequence of the higher dimensional existence results for MOTSs obtained in \cite{Eich1, Eich2}.

\proof[Proof of Lemma \ref{exist}]  
The argument is essentially contained in the proof of \cite[Theorem 5.1]{EGP}.  Suppose $\S_1$ has more than one component. Then we may express 
$\partial V$ as a disjoint union, $\d V = \S_{in} \cup \S_{out}$,  such that $\th_+ \leq 0$ along the components of $\S_{in}$, with respect to the inward pointing null normal, and with strict inequality on some component, and such that $\theta_+ \geq 0$ along the components of $\S_{out}$, with respect to the outward pointing null normal, and with strict inequality on some component.  (Take,  for example,  $\S_{in}$ to consist of one component of 
$\S_1$ and all components of $\S_0$ which are MOTS with respect to the inward pointing null normal, and take 
$\S_{out}$ to consist of the remaining components of $\S_1$ and all components of $\S_0$ which are MOTS with respect to the outward pointing null normal.)
Under these barrier conditions \cite[Theorem 3.1 and Section 5]{AM2} implies the existence of a MOTS $\S$ in $V$ homologous to $\S_0$.  Because of the strict barrier component, at least one component of $\S$ must be disjoint from $\S_0$ and is thus contained in $V \setminus \S_0$, contrary to our assumptions.  Hence, $\S_1$ must be connected.\qed

\medskip
A slightly more general notion of  MOTS was introduced in \cite{EGP}, and will be needed for our main results.

\begin{Def}
Given an initial data set $(V, g, K)$, we say that a subset $\S \subset V$ is an  \emph{immersed} MOTS if there exists a finite cover $p: \tilde V\rightarrow V$  and a  MOTS $\tilde \S$ in $\tilde V$ with respect to the pulled-back data $(p^{*}h, p^{*}K)$, such that $p(\tilde \S) = \S$.
\end{Def}

A simple example of an immersed MOTS (that is not a MOTS) occurs in the so-called 
$\bbR\bbP^3$ geon \cite{FSW, EGP}. 
The $\bbR\bbP^3$ geon is a globally hyperbolic spacetime that is double covered by the extended Schwarzschild spacetime.  Its Cauchy surfaces have the topology of  
$\bbR\bbP^3$ minus a point. 
The Cauchy surface $V$ covered by the totally geodesic time slice $\tilde V$ in the extended Schwarzschild spacetime has one asymptotically flat end (identical to an end in the Schwarzschild slice), and contains a projective plane $\S$ that is covered by the unique minimal sphere $\tilde \S$ in $\tilde V$.  Since $\S$ is not two-sided, it is not a MOTS. However, since the slice $\tilde V$ is totally geodesic, $\tilde \S$ is a MOTS, and hence
$\S$ is an immersed MOTS.

\section{The Main Result}
\label{results}

Consider a $3$--dimensional initial data set $(V, h,K)$, where $V$ is a compact, connected $3$--manifold with boundary $\d V = \S_0 \cup \S_1$, such that  $\S_0$ and $\S_1$ are orientable surfaces 
with no sphere components.\footnote{We assume throughout that $V$ is compact and connected, and that  
$\S_0$ and $\S_1$ are non-empty.}
In the  present context we are to think of $V$ as a compact spacelike hypersurface in the DOC of an ALADS black hole spacetime,  with $\S_0$ corresponding to a cross section of the event horizon and $\S_1$  corresponding to a surface ``near infinity".    At the initial data level, we represent $\S_0$  
by a marginally outer trapped surface (MOTS), and $\S_1$ by a null mean convex hypersurface. The null mean convexity condition follows from physically natural asymptotics in the ALADS setting; cf. \cite{CGS}. Moreover, since under suitable circumstances, there can be no (immersed) 
MOTSs\,\footnote{Such a surface would be visible from timelike infinity, but there are arguments precluding that possibility (see e.g.\ \cite{Wald, CGS}).} in the DOC, we will adopt this as an assumption on  $V \setminus \S_0$.  
 
\smallskip
The following is our main result.  For a  closed orientable surface $\S$, we let $g(\S)$ denote the sum of the genera of the components of $\S$.

\begin{thm}\label{main}
Let $(V, h,K)$ be a $3$--dimensional initial data set satisfying the following. 
\ben

\vs{-.5em}
\item[(a)] $V$ is a compact  manifold with boundary $\d V = \S_0 \cup \S_1$ where $\S_0$ and 
$\S_1$ are orientable surfaces, and each may have multiple  components, but no sphere components. 

\vs{-.5em}
\item[(b)] Each component of $\S_0$ is a MOTS (either with respect to  the inward pointing or outward pointing null normal\,\footnote{Thus we are allowing for both black holes and white holes.})
and $\S_1$ is null mean convex.  

\vs{-.5em}
\item[(c)] There are no immersed MOTS in  $V \setminus \S_0$.
\een
Then $\Sigma_1$ is connected and one of the following holds. 

\smallskip
\ben
\vs{-.5em}
\item[(i)] $\Sigma_0$ is connected, $g(\Sigma_0) = g(\Sigma_1)$, and $V$ is diffeomorphic to  $\Sigma_0 \times [0,1]$.
\vs{-.5em}
\item[(ii)] $\Sigma_0$ has multiple components, $g(\Sigma_0) = g(\Sigma_1)$, and $V$ is diffeomorphic to  $\Sigma_0 \times [0,1]$ with $1$--handles attached to $\Sigma_0 \times\{1\}$.
\vs{-.5em}
\item[(iii)] $\Sigma_0$ may have multiple components, $g(\Sigma_0) < g(\Sigma_1)$, and $V$ is diffeomorphic to $ \Sigma_0 \times [0,1]$ with $1$--handles attached to $\Sigma_0 \times\{1\}$.
\een
\end{thm}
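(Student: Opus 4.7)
The plan is to identify $V$ as a \emph{compression body} with $\partial_-V = \Sigma_0$ and $\partial_+V = \Sigma_1$: that is, a $3$--manifold obtained from $\Sigma_0 \times [0,1]$ by attaching finitely many $1$--handles to $\Sigma_0 \times \{1\}$.  For such a manifold $\Sigma_1$ is automatically connected, and writing $c$ for the number of components of $\Sigma_0$ and $k$ for the number of attached $1$--handles, an Euler characteristic computation (using $\chi(V) = \chi(\partial V)/2$ and $\chi(V) = \chi(\Sigma_0) - k$) yields $g(\Sigma_1) = g(\Sigma_0) + (k - c + 1)$, with the constraint $k \ge c-1$ forced by connectedness of $\Sigma_1$.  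The sub-cases $(c,k) = (1,0)$, $c>1$ with $k = c-1$, and $k > c-1$ then correspond exactly to (i), (ii), (iii); in particular $g(\Sigma_0) > g(\Sigma_1)$ is automatically excluded.  Connectedness of $\Sigma_1$ itself is immediate from Lemma \ref{exist} applied to $(V,h,K)$, since hypothesis (c) forbids any MOTS in $V\setminus\Sigma_0$.

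The next step is to verify that $V$ is irreducible and that $\Sigma_0$ and $\Sigma_1$ are both incompressible in $V$.  Arguing contrapositively, a compressing disk or an essential $2$--sphere permits surgery on the offending boundary component to produce a smaller surface which, via the weak-barrier form of Lemma \ref{exist} mentioned just after its statement, still serves as a barrier bounding a proper sub-region of $V\setminus\Sigma_0$ together with the rest of $\Sigma_1$.  The lemma then produces a MOTS strictly inside $V\setminus\Sigma_0$, contradicting (c).  This is the two-barrier analogue of the argument in \cite[Theorem 5.1]{EGP}.

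With $V$ irreducible and $\partial V$ incompressible, I invoke the Cooper--Long--Reid dichotomy \cite{CLR}: either $V$ is a compression body with $\partial_-V = \Sigma_0$, or $V$ contains a closed, essential, non-boundary-parallel surface $S$ (possibly only immersed) in its interior.  Suppose the latter.  Using residual finiteness of $\pi_1(V)$ (noted just after Theorem \ref{thm:product}), I can lift $S$ to an embedded surface in some finite cover $p\colon \tilde V \to V$ and arrange that a component $\tilde S$ of $p^{-1}(S)$ separates $\tilde V$ into two regions, one of which---call it $W$---is disjoint from $p^{-1}(\Sigma_0)$ and meets a lift of $\Sigma_1$.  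Inside $W$, with $\tilde S$ (or a homologous smooth representative from Lemma \ref{lem:realize}) as inner barrier and the lifted null mean convex surface $p^{-1}(\Sigma_1) \cap \partial W$ as outer barrier, Lemma \ref{exist} produces a MOTS $\tilde \Sigma \subset W$.  Its image $p(\tilde \Sigma) \subset V\setminus\Sigma_0$ is then an immersed MOTS in $V\setminus\Sigma_0$, contradicting (c).  Hence $V$ is a compression body, and the handle count from the first paragraph produces the case distinction (i)--(iii).

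The hard part is clearly the last step: converting the purely topological surface from \cite{CLR} into a genuine geometric MOTS.  The cover $\tilde V$ must be chosen delicately enough that a component of $p^{-1}(S)$ can legitimately serve as an inner MOTS-type barrier for Lemma \ref{exist}, and the region $W$ must inherit the null mean convex outer barrier structure from the cover.  A parallel but milder technical challenge appears in the topological reductions of the second paragraph, where a compressed boundary surface must retain barrier properties usable in the nonstrict form of Lemma \ref{exist}; this is where the weak barrier hypothesis of \cite[Section 5]{AM2} becomes essential.
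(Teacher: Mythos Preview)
Your Euler-characteristic bookkeeping in the first paragraph is fine, and connectedness of $\Sigma_1$ via Lemma~\ref{exist} matches the paper.  The remaining two paragraphs, however, each contain a genuine gap.

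First, $\Sigma_1$ is \emph{not} incompressible in general: in conclusions (ii) and (iii) the co-cores of the attached $1$--handles are precisely compressing disks for $\Sigma_1$.  So your second paragraph is attempting to prove something false.  The argument breaks for a concrete reason: surgering a null-mean-convex boundary along a compressing disk does not produce a surface with any controlled null expansion, so the result cannot serve as a barrier in Lemma~\ref{exist}.  The paper never claims $\Sigma_1$ is incompressible; instead Lemma~\ref{topo} treats the compressible case directly, compressing $\Sigma_1$ down to an incompressible $\Sigma_1'$ and applying \cite{CLR} to the resulting pieces---a purely topological step requiring no barriers at all.

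Second, and more fundamentally, the strategy of using the CLR surface $\tilde S$ (or any surgered or homology-represented surface) as an \emph{inner barrier} for Lemma~\ref{exist} cannot work: that lemma requires $\theta_+ \le 0$ on the inner boundary, and a surface supplied by purely topological considerations carries no information about its null expansion.  You flag this at the end as ``the hard part'', but it is not a technicality to be smoothed over---it is the central obstruction, and your outline contains no mechanism for overcoming it.  The paper's device is the key idea you are missing: rather than using the topological surface as a barrier, one uses it to construct a \emph{finite cover} $p\colon \tilde V \to V$ in which $p^{-1}(\Sigma_1)$ is \emph{disconnected}.  The lifted $\Sigma_0$ and $\Sigma_1$ retain their MOTS and null-mean-convex properties (these are local conditions), and Lemma~\ref{exist} applied in $\tilde V$ with \emph{those} lifted barriers yields a MOTS in $\tilde V \setminus p^{-1}(\Sigma_0)$, hence an immersed MOTS in $V \setminus \Sigma_0$.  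This is how Lemma~\ref{props} establishes irreducibility, incompressibility of $\Sigma_0$, and the absence of non-separating closed surfaces in any finite cover; it is also why the hypothesis ``no \emph{immersed} MOTS'' (rather than merely ``no embedded MOTS'') is exactly the right one.
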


\begin{figure}
\centering
\includegraphics[width=\textwidth]{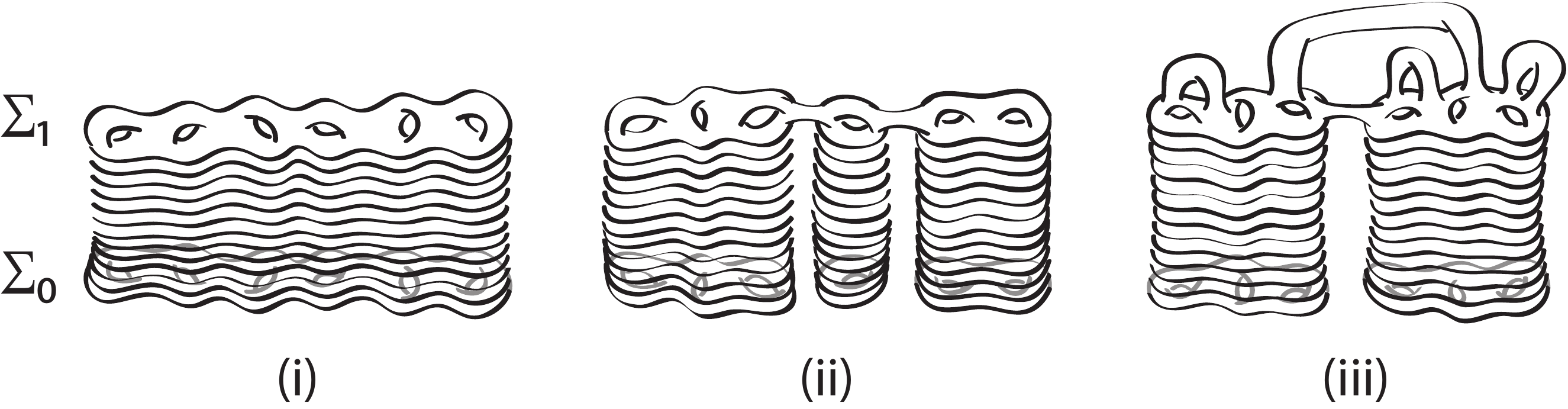}
\caption{Examples of the three possible topology types from Theorem~\ref{main}.}
\label{fig:mainthm}
\end{figure}

Note that in all three cases, $g(\S_0) \le g(\S_1)$.  This is consistent with the genus inequality obtained in \cite[Theorem 4.1]{GSWW} in the spacetime setting, as well as a more restrictive inequality obtained in \cite{lee-neves}.  Theorem \ref{main} gives a complete description of the possible topologies when equality   holds and when the inequality is strict. Figure~\ref{fig:mainthm}  illustrates these possibilities; these are examples of what are called {\em compression bodies}.  Case~(i) is realized by suitably truncated time slices in the generalized Kottler spacetimes.  It is an open question whether the more complicated topologies of cases (ii) and (iii) can be physically realized.  
 In the static case, a uniqueness result for negative mass Kottler spacetimes is obtained in \cite{lee-neves}, assuming, in the notation used here, that 
the genus of a component of $\S_0$  equals the genus of $\S_1$.  If $\S_0$ is connected, the following immediate consequence of Theorem \ref{main} guarantees {\it topological} uniqueness in this case.


\begin{cor}\label{cor}
Let $(V, h,K)$ be a $3$--dimensional initial data set such that $V$ is a compact  manifold with boundary $\d V = \S_0 \cup \S_1$, where $\S_0$ and 
$\S_1$ are orientable surfaces, with $\S_0$ connected 
and with $g(\S_0) \ge g(\S_1) \geq 1$. Suppose that  $\S_0$ is a MOTS and $\S_1$ is null mean convex.  If there are no immersed MOTS in  $V \setminus \S_0$  then, in fact,  $g(\S_0) = g(\S_1)$ and $V$ is diffeomorphic to  $\Sigma_0 \times [0,1]$.
\end{cor}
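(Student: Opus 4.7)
The plan is to obtain Corollary \ref{cor} as a direct consequence of Theorem \ref{main}. First I would check that the hypotheses of Theorem \ref{main} are all met. Since $g(\Sigma_0) \ge 1$ and $g(\Sigma_1) \ge 1$, neither boundary piece has a spherical component, so hypothesis (a) holds. Hypothesis (b) is immediate from the assumptions that $\Sigma_0$ is a MOTS (with respect to one of the null normals) and $\Sigma_1$ is null mean convex. Hypothesis (c) is assumed directly. Hence the conclusion of Theorem \ref{main} applies.

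Next I would examine the three cases of Theorem \ref{main} in light of the extra assumption that $\Sigma_0$ is connected. Cases (ii) and (iii) of Theorem \ref{main} both require $\Sigma_0$ to have multiple components, and are therefore incompatible with the connectedness of $\Sigma_0$. Only case (i) remains possible, and it gives exactly the desired conclusion: $g(\Sigma_0) = g(\Sigma_1)$ and $V$ is diffeomorphic to $\Sigma_0 \times [0,1]$.

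I would also note, as a sanity check, that the additional hypothesis $g(\Sigma_0) \ge g(\Sigma_1)$ is automatically consistent with this conclusion. Indeed, all three cases of Theorem \ref{main} yield $g(\Sigma_0) \le g(\Sigma_1)$, so combining with the corollary's hypothesis forces equality, which again rules out case (iii) and is consistent with case (i). Because the deduction is immediate from Theorem \ref{main}, there is no real obstacle in the argument; the substantive work has already been carried out in the proof of Theorem \ref{main}, and the role of the corollary is simply to package its case (i) under the natural single-horizon assumption.
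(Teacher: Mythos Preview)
Your approach matches the paper's, which also derives the corollary directly from Theorem~\ref{main}. However, there is a misreading of case~(iii): it does \emph{not} require $\Sigma_0$ to have multiple components---the statement says ``$\Sigma_0$ \emph{may} have multiple components,'' so connectedness alone does not exclude it. What rules out case~(iii) is the genus hypothesis $g(\Sigma_0) \ge g(\Sigma_1)$, since case~(iii) has the strict inequality $g(\Sigma_0) < g(\Sigma_1)$. You do supply this argument, but you frame it as a ``sanity check'' rather than as the essential step; in fact it is doing real work, and without it your elimination of case~(iii) would be unjustified. With that correction---connectedness rules out~(ii), the genus hypothesis rules out~(iii)---your proof is correct and coincides with the paper's.
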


\noindent
{\it Remark:} If $\S_0$ is not connected (but has no sphere components), the conclusion of this corollary is still true provided the genus assumption $g(\S_0) \ge g(\S_1)$ is replaced by: $g(\S') \ge g(\S_1)$, where $\S'$ is a component of $\S_0$.  Indeed, this assumption clearly rules our cases (ii) and (iii) of Theorem \ref{main}.  If one further allows $\S_0$ to have some sphere components, then, by a modification of our arguments, one could conclude that 
$V$ is topologically a product minus a finite number of balls.

\smallskip
Theorem \ref{main} follows from Lemmas \ref{props} and \ref{topo} presented below.  Lemma \ref{props} establishes properties of $V$ that make explicit use of the geometric assumptions, e.g, that there are no immersed MOTSs in $V\setminus \S_0$.  Lemma \ref{topo} is a purely topological result derived from a key result in \cite{CLR}.

\begin{lem}\label{props}  
Let $(V, h,K)$ be a $3$--dimensional initial data set satisfying the following: 
\ben

\item[(a)] $V$ is a compact  manifold with boundary $\d V = \S_0 \cup \S_1$ where $\S_0$ and $\S_1$ are orientable surfaces, and each may have multiple components, but no sphere components. 

\item[(b)] Each component of $\S_0$ is a MOTS (either with respect to  the inward pointing or outward pointing null normal) and $\S_1$ is null mean convex.  

\vs{-.5em}
\item[(c)] There are no immersed MOTS in  $V \setminus \S_0$.
\een
Then,

\ben

\vs{-.3em}
\item[(i)] $V$ is orientable.

\item[(ii)] There are no non-separating closed surfaces in any finite cover of $V$.

\vs{-.3em}
\item[(iii)] $V$ is irreducible.

\vs{-.3em}
\item[(iv)] $\Sigma_0$ is incompressible. 

\vs{-.3em}
\item[(v)] $\Sigma_1$ is  connected. 
\een
\end{lem}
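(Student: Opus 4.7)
The unifying principle is that the hypotheses (a)--(c) pull back under any finite cover $\pi\colon\tilde V\to V$: the pulled-back initial data $(\pi^{*}h,\pi^{*}K)$ makes $\tilde V$ an initial data set in which every boundary component inherits the MOTS or null-mean-convex property, and any MOTS in $\tilde V\setminus\tilde\Sigma_0$ would project to an immersed MOTS in $V\setminus\Sigma_0$, violating (c). Lemma \ref{exist} therefore forces the lift $\tilde\Sigma_1$ of $\Sigma_1$ to be connected in \emph{every} finite cover of $V$. The plan is to prove the claims in the order (v), (i), (ii), (iii), (iv). Claim (v) is immediate from Lemma \ref{exist} applied to $V$ itself. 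For (i), if $V$ were non-orientable, take the orientation double cover $\hat V$; because $\Sigma_0$ and $\Sigma_1$ are orientable and hence two-sided, every component of $\partial V$ lifts to two disjoint copies in $\hat V$, so $\hat\Sigma_1$ has at least two components, contradicting Lemma \ref{exist} on $\hat V$.

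For (ii), suppose some finite cover $\tilde V$ of $V$ contains a non-separating closed surface $\tilde S$. Then $[\tilde S]$ is a nonzero class in $H_2(\tilde V;\mathbb{Z}/2)$, witnessed by any loop transverse to $\tilde S$ with odd intersection number (such a loop exists whether $\tilde S$ is two-sided or one-sided; in the one-sided case a meridian of the M\"obius normal bundle does the job). Poincar\'e--Lefschetz duality with $\mathbb{Z}/2$ coefficients (which needs no orientability) converts this to a nonzero class in $H^1(\tilde V,\partial\tilde V;\mathbb{Z}/2)$, i.e.\ a surjective homomorphism $\phi\colon\pi_1(\tilde V)\to\mathbb{Z}/2$ vanishing on the image of $\pi_1(\partial\tilde V)$. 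The associated index-two cover $\tilde V'\to\tilde V$ is trivial over each boundary component, so $\tilde\Sigma_1'$ has at least twice as many components as $\tilde\Sigma_1$, and is in particular disconnected, contradicting Lemma \ref{exist} applied to the finite cover $\tilde V'\to V$. For (iii), suppose $V$ contains an embedded $S^2$ that bounds no ball. If that sphere were non-separating, (ii) already gives a contradiction, so assume it separates $V=V_1\cup_{S^2}V_2$ with neither $V_i$ a ball. By (v), $\Sigma_1$ lies entirely on one side, which we may label $V_2$. Then $\pi_1(V_1)\ne 1$: if $\partial V_1=S^2$ alone, then $V_1=X\setminus B^3$ for some closed $X\ne S^3$, and $\pi_1(X)\ne 1$ by the Poincar\'e conjecture; otherwise $V_1$ has some non-spherical boundary component coming from $\Sigma_0$, and the half-lives-half-dies principle forbids a simply connected compact orientable $3$-manifold from having any non-spherical boundary. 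Residual finiteness of $\pi_1(V_1)$ (Hempel) yields a finite quotient $G_1$ of order $n\ge 2$. Van Kampen gives $\pi_1(V)=\pi_1(V_1)*\pi_1(V_2)$ (since $\pi_1(S^2)=1$), so extend the quotient to $\phi\colon\pi_1(V)\to G_1$ by killing $\pi_1(V_2)$. In the degree-$n$ cover $\tilde V$ associated to $\ker\phi$, the preimage of $V_2$ consists of $n$ disjoint copies of $V_2$, so $\tilde\Sigma_1$ has $n\ge 2$ components, contradicting Lemma \ref{exist}.

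Claim (iv) is of a different character and is the main obstacle, as it cannot be reduced to purely topological manipulations of Lemma \ref{exist}. Suppose a component $\Sigma\subset\Sigma_0$ were compressible via a disk $D\subset V$. I would perform surgery on $\Sigma$ along $D$ to produce a surface $\Sigma'$ of strictly smaller genus and then push $\Sigma'$ slightly into $V\setminus\Sigma_0$. Following the standard MOTS compression argument of Galloway--Schoen, as adapted in \cite{EGP}, the MOTS stability of $\Sigma$, combined with a careful choice of the normal direction for the push-off, should allow $\Sigma'$ to be arranged as a strict outer trapped surface ($\theta_+<0$). The Andersson--Metzger existence theorem (\cite{AM2}, including the weak-barrier extension in \cite[Section 5]{AM2}), applied with $\Sigma'$ as inner strict barrier and $\Sigma_1$ as outer barrier, then produces a MOTS in $V\setminus\Sigma_0$, contradicting (c). The delicate step---where genuine geometric rather than topological input is required---is verifying that the compression can be oriented so that the outward null expansion becomes strictly negative on the capping disks.
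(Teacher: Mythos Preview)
Your arguments for (v), (i), (ii), (iii) are correct and essentially match the paper's, with only cosmetic differences (you phrase (ii) homologically via $\mathbb{Z}/2$ duality, while the paper does an explicit cut-and-paste double cover; both are fine).

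The genuine gap is (iv). Your claim that incompressibility of $\Sigma_0$ ``cannot be reduced to purely topological manipulations of Lemma~\ref{exist}'' is false, and this misapprehension leads you down an unworkable path. The paper handles (iv) with exactly the same covering trick used for (ii) and (iii). Let $D$ be a compressing disk for $\Sigma_0$; since $\partial D\subset\Sigma_0$, $D$ is disjoint from $\Sigma_1$. If $D$ is non-separating in $V$, cut along $D$ and glue two copies (as in (ii) but with a properly embedded disk in place of a closed surface) to get a double cover in which $\Sigma_1$ lifts to two copies. If $D$ separates $V$ into $V'\ni\Sigma_1$ and $V''$, then $\partial D$ separates its component of $\Sigma_0$ into two pieces of genus $\ge 1$ (since $\partial D$ bounds no disk there), so $\partial V''$ has a component of positive genus and hence $\pi_1(V'')\ne 1$; now residual finiteness gives a finite cover of $V''$, which extends (trivially over $V'$) to a finite cover of $V$ in which $\Sigma_1$ is replicated. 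Either way $\Sigma_1$ becomes disconnected in a finite cover, contradicting Lemma~\ref{exist}.

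By contrast, your proposed geometric route---surger $\Sigma_0$ along $D$, push off, and argue the result is strictly outer trapped---is not available under the stated hypotheses. The Galloway--Schoen compression argument requires, at minimum, stability of the MOTS and the dominant energy condition; neither is assumed here. You yourself flag the capping step as ``delicate'' and leave it unverified; in fact there is no reason to expect it to go through. So (iv) as you have it is a real hole, and the fix is to recognize that the compressing disk itself furnishes the finite cover you need.
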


\begin{proof}
We show that if any one of (i), (ii), (iii), (iv) is not satisfied, then there exists a finite cover $p\colon \tilde V \to V$ such that $p^{-1}(\S_1)$ is not connected.  Consequentially, Lemma \ref{exist} then implies that there is an immersed MOTS in $V \setminus \S_0$, contrary to assumption. We immediately obtain (v) by taking the trivial cover in Lemma \ref{exist}.

(i) If $V$ is non-orientable, then $V$ has an orientable double cover.  For such covers, each boundary component of $\bdry V$, since assumed orientable, lifts to two components.  In particular, the lift of $\Sigma_1$ is not connected.

(ii) Assume $S$ is a non-separating closed surface in $V$. Without loss of generality we may assume $S$ is connected.  Form a double cover $p \colon \tilde{V} \to V$ as follows:  Let $W$ be the result of cutting $V$ open along $S$.  Observe that $\bdry W = \bdry V \cup \tilde{S}$ where $\tilde{S}$ is the double cover of $S$ arising as the boundary of a collar neighborhood of $S$ in $V$. Let $\phi \colon \tilde{S} \to \tilde{S}$ be the corresponding deck transformation.  Then let $(W_i,\tilde{S}_i)$ be a copy of $(W,\tilde{S})$ for each $i=1,2$ and form $\tilde{V} = W_1 \cup W_2 / \tilde{\phi}$ where $\tilde{\phi} \colon \tilde{S}_1 \to \tilde{S}_2$
 is the orientation reversing homeomorphism induced by $\phi$.  By construction $p^{-1}(\Sigma_1)$ is not connected.

If instead $S$ were a non-separating closed surface in a finite cover $V'$ of $V$, then as above there would be a double cover $\tilde{V}'$ of $V'$.   Composing these covers gives a finite cover $p \colon \tilde{V}' \to V$ for which $p^{-1}(\Sigma_1)$ is not connected.

(iii)  Assume $V$ is reducible.  Since $\bdry V \neq \emptyset$, this implies $V = V' \# V''$ is a non-trivial connected sum.  (The only prime but reducible $3$--manifolds are $S^1 \times S^2$ and $S^1 \tilde{\times} S^2$ \cite[Lemma 3.13]{Hempel}.)  Since $\Sigma_1$ is connected, it is a boundary component of, say, $V'$.   Since, as discussed in Section \ref{prelim},  every compact $3$--manifold is residually finite, $V''$ has a cover of finite index $k>1$.  This induces a finite cover $p \colon \tilde{V} \to V$ in which the $V'$ summand lifts to $k$ disjoint copies.  Hence $p^{-1}(\Sigma_1)$ is not connected.

(iv) Assume $\Sigma_0$ is compressible.  Thus there is a compressing disk $D$ for $\Sigma_0$, and $D$ is necessarily disjoint from $\Sigma_1$.  If $D$ is non-separating in $V$, then we may form a double cover as in (ii) with $D$ playing the role of $S$.  On the other hand, if $D$ separates $V$ into $V'$ and $V''$ where $\Sigma_1$ is a boundary component of $V'$ (recall that $\Sigma_1$ is connected), then we may form a finite cover of $V$ extending a finite cover of $V''$ as in (iii).  In either case, $\Sigma_1$ lifts to a disconnected surface.
\end{proof}

\begin{lem}\label{topo}
Let $V$ be a compact, connected, orientable, irreducible $3$--manifold with $\bdry V=\Sigma_0\cup\Sigma_1$ such that $\Sigma_1$ is connected, $\Sigma_0$ is incompressible and potentially disconnected, 
and $\bdry V$ has no sphere components.  Assume there are no non-separating closed surfaces in any finite cover of $V$.

Then, using $|\Sigma_0|$ to denote the number of components of $\Sigma_0$,
\begin{itemize}
\item  $g(\Sigma_0) \leq g(\Sigma_1)$ and
\item  $V \cong \Sigma_0 \times [0,1]$ with at least $|\Sigma_0|-1$ $1$--handles attached to $\Sigma_0 \times\{1\}$.
\end{itemize}
Moreover $g(\Sigma_0) = g(\Sigma_1)$ if and only if exactly $|\Sigma_0|-1$ $1$--handles are attached to $\Sigma_0 \times\{1\}$. 
In particular, if $\Sigma_0$ is connected, then $V \cong \Sigma_0 \times [0,1]$.
\end{lem}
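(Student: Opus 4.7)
The plan is to show that, under these hypotheses, $V$ is a compression body with $\partial_{-}V = \Sigma_0$ and $\partial_{+}V = \Sigma_1$, from which both the genus inequality and the $1$--handle count will follow by a routine Euler characteristic calculation. I would proceed in three steps. First, I would maximally compress $\Sigma_1$ in $V$: choose a pairwise disjoint maximal collection of compressing disks $D_1,\dots,D_m$ for $\Sigma_1$ (these are automatically disjoint from the incompressible $\Sigma_0$), cut $V$ open along $\bigcup_i D_i$, and cap off any resulting $2$--sphere boundary components with the $3$--balls supplied by irreducibility. The result $V^{*}$ is an irreducible, compact, orientable $3$--manifold with $\partial V^{*} = \Sigma_0 \cup \Sigma_1^{*}$, where $\Sigma_1^{*}$ is now incompressible and contains no sphere components, and $V$ is recovered from $V^{*}$ by attaching $m$ three-dimensional $1$--handles to $\Sigma_1^{*}$ (with co-cores the $D_i$).

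The key step is to show that each component $W$ of $V^{*}$ is a product $\Sigma \times [0,1]$. Such a $W$ is compact, connected, orientable, irreducible, with incompressible boundary of positive genus. The main result of \cite{CLR} asserts that such a $W$ is either an $I$--bundle or contains an essential closed immersed surface that is not boundary-parallel. In the second alternative, LERF for $3$--manifold groups (see \cite{afw-survey}) promotes the immersed surface to an embedded one in some finite cover $\tilde W$ of $W$, and then a Bass--Serre / cyclic-cover manipulation yields an embedded non-separating closed surface in a further finite cover. Using LERF of $\pi_1(V)$ again, any such cover of $W$ extends to a finite cover of $V$ whose preimage of $W$ contains this surface as an embedded submanifold; standard separation arguments keep the surface non-separating in the cover of $V$, contradicting our hypothesis. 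Hence $W$ is an $I$--bundle; Theorem \ref{thm:product} then identifies it as either $\Sigma \times [0,1]$ or a twisted $I$--bundle over a non-orientable $\bar\Sigma$, and the twisted case is excluded by an analogous cyclic-cover promotion applied to $\bar\Sigma$. Thus $W \cong \Sigma \times [0,1]$ with one copy of $\Sigma$ a component of $\Sigma_0$.

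Reassembling, $V^{*} \cong \Sigma_0 \times [0,1]$ and $V$ is obtained by attaching the $m$ $1$--handles to $\Sigma_0 \times\{1\} = \Sigma_1^{*}$. Since $\Sigma_1$ is connected while $\Sigma_0 \times\{1\}$ has $|\Sigma_0|$ components, at least $|\Sigma_0|-1$ of these handles must merge distinct components, giving $m \geq |\Sigma_0|-1$. The identity $\chi(\Sigma_1) = \chi(\Sigma_0) - 2m$ then rearranges to $g(\Sigma_1)-g(\Sigma_0) = m - (|\Sigma_0|-1) \geq 0$, with equality precisely when $m = |\Sigma_0|-1$; in particular if $\Sigma_0$ is connected then $m=0$ and $V \cong \Sigma_0 \times [0,1]$. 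The main obstacle is the second paragraph: extracting product structure via \cite{CLR} requires modern subgroup separability input, and both the promotion from immersed/separating to embedded/non-separating and the extension of finite covers of $W$ back to covers of $V$ across the attached $1$--handles call for careful handling.
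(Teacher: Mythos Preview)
Your overall architecture---maximally compress $\Sigma_1$, show each resulting piece is a product, reassemble with $1$--handles, and read off the genus inequality from an Euler characteristic count---is exactly the paper's. The divergence, and the problem, is in your middle step.

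You invoke LERF for $3$--manifold groups twice: once to promote an immersed essential closed surface in $W$ to an embedded one in a finite cover, and once to extend a finite cover of $W$ to a finite cover of $V$. But LERF is \emph{not} established for arbitrary compact $3$--manifold groups; there are graph manifolds whose fundamental groups fail to be subgroup separable, and nothing in the hypotheses excludes such $W$ or $V$. The survey \cite{afw-survey} does not support the blanket claim. As written, this is a genuine gap, and you flag it yourself as ``the main obstacle'' without resolving it.

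The paper sidesteps both appeals to separability. For the first, it uses a different output of \cite{CLR} (their Corollary~3.6): either $W$ is finitely covered by a product, or some finite cover $\tilde W$ satisfies $\rk\bigl(H_2(\tilde W)/\incl_*H_2(\partial\tilde W)\bigr)>0$. This is a \emph{homological} condition, and Lemma~\ref{lem:realize} realizes any class in $H_2(\tilde W)$ by an embedded closed surface; since every connected separating closed surface is homologous to a union of boundary components, the extra rank forces an embedded non-separating component directly---no LERF required. For the second, the cover of $W$ extends to a cover of $V$ by an elementary construction: take $k$ disjoint copies of $V'\setminus W$ together with $\tilde W$, and reattach $k$ lifts of each $1$--handle along matching preimages of its attaching disks. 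This is pure cut-and-paste, not group theory.

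There is also a smaller omission. After concluding that each component $W$ of $V^{*}$ is an $I$--bundle, you assert that one of its boundary components lies in $\Sigma_0$. But a priori $W\cap\Sigma_0$ may be empty, with $\partial W\subset\Sigma_1^{*}$. The paper treats this case explicitly: if $W\cong F\times[0,1]$ with both ends in $\Sigma_1^{*}$, the core $F\times\{\tfrac12\}$ becomes non-separating in $V$ once the $1$--handles reconnect $\Sigma_1$; if $W$ is a twisted $I$--bundle, its non-orientable base surface is one-sided and hence already non-separating in $V$. Either way one contradicts the hypothesis directly, without passing to a cover.
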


\begin{proof}
The proof splits according to whether or not $\Sigma_1$ is incompressible.

\smallskip
\noindent {\bf Case 1.}  $\Sigma_1$ is incompressible.

Corollary 3.6 in \cite{CLR} then shows that either (a) $V$ is covered by a product $F \times [0,1]$ for some surface $F$ or (b) there exists a finite cover $\tilde{V}$ of $V$ such that $\rk(H_2(\tilde{V})/\incl_*(H_2(\bdry \tilde{V})))>0$, where $\incl_* \colon  H_2(\bdry \tilde{V})) \to H_2(\tilde{V})$ is the homomorphism induced by inclusion.  

%
\smallskip
\noindent(a)  There is a cover $p \colon \tilde{V} \cong F \times [0,1] \to V$ for some surface $F$.
  Let $\tilde{\Sigma}_i$ be the lift of $\Sigma_i$ to $\tilde{V}$ for each $i=0,1$.  Since $\bdry \tilde{V}$ has just two boundary components, $\tilde{\Sigma}_i$ is connected for each $i=0,1$.  Hence both $\Sigma_0$ and $\Sigma_1$ are connected.  

Fix a basepoint $* \in \Sigma_0$.  By incompressibility, the Loop Theorem implies the inclusion map $\Sigma_0 \hookrightarrow V$ induces an injection $\pi_1(\Sigma_0, *)$  into $\pi_1(V, *)$.  We claim that this is also a surjection:  A loop $\gamma \subset V$ based at $*$ lifts under $p$ to  a loop $\tilde{\gamma}$ connecting two points of $p^{-1}(*) \subset \tilde{\Sigma_i}$.  Since $\tilde{V} \cong F \times [0,1]$, $\tilde{\gamma}$ is homotopic to  a path $\tilde{\gamma}'$ connecting the same points of $p^{-1}(*)$.  Thus $p(\tilde{\gamma})$ is homotopic in $V$, fixing endpoints in $* \in \Sigma_0$, to $\gamma$.  Thus $\pi_1(\Sigma_0) \cong \pi_1(V)$. 
Hence, we are in the first case of   Theorem~\ref{thm:product}, which then  implies that $V \cong \Sigma_0 \times [0,1]$.  Consequentially,  $g(\Sigma_0) = g(\Sigma_1)$.

\smallskip
\noindent(b) There is a finite cover $p\colon \tilde{V} \to V$  such that $\rk(H_2(\tilde{V})/\incl_*(H_2(\bdry \tilde{V})))>0$.   Hence there is a non-trivial element in $H_2(\tilde{V})$ that is not in $\incl_*(H_2(\bdry \tilde{V}))$.
By Lemma~\ref{lem:realize}, such an element can be realized by a closed surface $S$ embedded in $\tilde{V}$.  Since any connected, closed, separating surface in $\tilde{V}$ represents a class in $\incl_*(H_2(\bdry \tilde{V}))$, $S$ must have a non-separating component.  This is contrary to our assumptions.

\medskip
\noindent {\bf Case 2.}  $\Sigma_1$ is compressible. 

  
It is always possible to choose a minimal set of mutually disjoint compressing disks for $\Sigma_1$ such that the compressions of $\Sigma_1$ produces a manifold $V' \subset V$ with incompressible boundary $\Sigma_0 \cup \Sigma_1'$. Let $D_1, \dots, D_n$ be such a set.  
Then $V$ may be recovered from $V'$ by attaching disjoint $1$--handles $h_1, \dots, h_n$ along $\Sigma_1'$; the compressing disk $D_i$ may be identified with the co-core of the $1$--handle $h_i$.  For each $1$--handle $h_i$, let $d_i$ and $e_i$ be the two disks in $\Sigma_1'$ to which it is attached.

We claim that $\Sigma_1'$ has no sphere components:  Otherwise, since $V$ is $V'$ with $1$--handles attached, $\bdry V$ has no sphere components, and $V$ is irreducible, a sphere component  of $\Sigma_1'$ would have to bound a $3$--ball $B$ in $V'$ ($V \setminus V'$ is just the $1$--handles).  Because $\Sigma_0 \neq \emptyset$ and $\Sigma_1$ is connected, there must be a $1$--handle, say $h$, connecting $\bdry B$ to another component of $\Sigma_1'$.  But then $V' \cup h \cup B \cong V'$, implying that the compression corresponding to $h$ was not necessary.  This is contrary to the presumed minimality of our set of compressing disks.  

Then since $V$ is irreducible, so also must be $V'$.  However, it may be that $V'$ is disconnected even though $V$ is connected. 
Let the components of $V'$ be $W^1, \dots, W^r$.   Then for each $s =1, \dots, r$, let $W_0^s = W^s \cap \Sigma_0$ and $W_1^s = W^s \cap \Sigma_1'$ so that $\bdry W^s = W_0^s \cup W_1^s$.  

We claim that each component is a product, $W^s \cong W_0^s \times [0,1]$.  If this is so, then $V' \cong \Sigma_0 \times [0,1]$ and thus $V$ is the product $\Sigma_0 \times [0,1]$ with $1$--handles attached to $\Sigma_0 \times \{1\}$.  Because $\Sigma_1$ is connected, a total of $|\Sigma_0|-1$ $1$--handles are required to connect all the components of $V'$.  These $1$--handles do not alter the total genus of the boundary.  Any further attachment of $1$--handles increases the genus.  The conclusion of the theorem then follows. Therefore the remainder of this proof shows that $W^s \cong W_0^s \times [0,1]$ for each $s=1, \dots, r$.

\smallskip
 Because $W^s$ is irreducible and has incompressible boundary with no sphere components, we may appeal to \cite[Corollary 3.6]{CLR} again and follow a similar argument as in Case 1 for each $s =1, \dots, r$.

\smallskip
\noindent(a) There is a cover $p^s \colon \tilde{W^s} \cong F \times [0,1] \to W^s$.  As before, elementary covering space arguments and Theorem~\ref{thm:product} imply that $W^s$ is an $I$--bundle over a surface.  Thus $\bdry W^s$ has at most two components. By construction, we necessarily have $W_1^s \neq \emptyset$, but it is possible that $W_0^s= \emptyset$.    If $W_0^s \neq \emptyset$, then both $W_0^s$ and $W_1^s$ are connected and we again conclude as in Case 1 that $W^s \cong W_0^s \times [0,1]$ and $g(W_0^s)=g(W_1^s)$.  This is the desired conclusion.

If  $W_0^s = \emptyset$ and $W_1^s$ has two components, then we again conclude that the two components have the same genus and $W^s$ is the product of a closed surface and an interval.  In particular, there is a closed surface $S$ in the interior of $W^s$
that separates the components of $W_1^s$.   Since $\Sigma_1$ is connected, there is a path joining the two components of $W_1^s$ that runs through a subset of the $1$--handles $h_1, \dots, h_n$ and the components of $\Sigma_1' \setminus  W_1^s$.  Hence in $V$ the surface $S$ is non-separating, contrary to assumption.

 If  $W_0^s = \emptyset$ and $W_1^s$ has just one component, then $W^s$ is a twisted $I$--bundle over a closed, compact, non-orientable surface $S$.  Since $W^s$ is orientable, $S$ is one-sided and hence non-separating in $W^s$.  Therefore it remains non-separating in $V$, a contradiction.  
 
\smallskip
\noindent(b) There is a finite cover $p^s \colon \tilde{W^s} \to W^s$ of degree $k$ such that $\rk(H_2(\tilde{W^s})/\incl_*(H_2(\bdry \tilde{W^s})))$ $>0$.
Again there is a non-separating, closed surface $S$ embedded in $\tilde{W^s}$.  

We now extend this to a finite covering of $V$ in which $S$ continues to be non-separating.  First extend $p^s$ to a degree $k$ cover $p' \colon \tilde{V'} \to V'$ where $\tilde{V'}$ is the disjoint union of $\tilde{W^s}$ with $k$ copies of $V' \setminus W^s$ and $p' \vert (\tilde{V'} \setminus \tilde{W^s})$ is the trivial $k$--fold covering map.
Then each of the disks $d_i$ and $e_i$ in $\Sigma_1'$ (to which the handle $h_i$ is attached) has $k$ preimages in $\tilde{\Sigma_1'} \subset \bdry \tilde{V'}$. Denote the preimage of $d_i$ as $\tilde{d_i}^1 \cup \dots \cup \tilde{d_i}^k$ and the preimage of $e_i$ as $\tilde{e_i}^1 \cup \dots \cup \tilde{e_i}^k$.  For each $i=1, \dots, n$,  let $\tilde{h_i}^1\cup \dots \cup \tilde{h_i}^k$ be $k$ disjoint $1$--handles covering $h_i$; then for each $j=1,\dots,k$, attach the $1$--handle $\tilde{h_i}^j$ to $\tilde{V'}$ along the disks $\tilde{d_i}^j$ and $\tilde{e_i}^j$ in $\tilde{\Sigma_1'}$. By construction, this gives a covering map $p \colon \tilde{V} \to V$ that restricts to our initial covering map $p' \colon \tilde{V'} \to V'$.  In particular, $S$ continues to be non-separating in $\tilde{V}$, contrary to our assumptions.
\end{proof}
Lemmas \ref{props} and \ref{topo} combine to give a proof of Theorem \ref{main}.
\proof[Proof of Theorem \ref{main}] Lemma \ref{props} implies that all the assumptions of Lemma \ref{topo} hold.  The conclusion of Lemma \ref{topo} implies that of Theorem \ref{main}.\qed

\medskip
\noindent\emph{Acknowledgements.} The authors would like to thank Piotr Chru\'sciel for valuable comments related to this work and Nikolai Saveliev for his input. 
The work of KLB was supported by Simons Foundation Collaboration Grant \#209184. 
 The work of GJG was supported by NSF grant DMS-1313724 and by a Fellows Program grant from the Simons Foundation (Grant No.  63943). 


\begin{thebibliography}{10}

\bibitem{AEM}
L.~Andersson, M.~Eichmair, and J.~Metzger, \emph{Jang's equation and its
  applications to marginally trapped surfaces}, in: Complex Analysis and
  Dynamical Systems IV: Part 2. General Relativity, Geometry, and PDE,
  Contemporary Mathematics, vol. 554, (AMS and Bar-Ilan), 2011.

\bibitem{AM2}
L.~Andersson and J.~Metzger, \emph{The area of horizons and the trapped
  region}, Comm. Math. Phys. \textbf{290} (2009), no.~3, 941--972.

\bibitem{afw-survey}
M.~Aschenbrenner, S.~Friedl, and H.~Wilton, \emph{3-manifold groups}, 2012,
  arXiv:1205.0202.

\bibitem{Bredon}
Glen~E. Bredon, \emph{Topology and geometry}, Graduate Texts in Mathematics,
  vol. 139, Springer-Verlag, New York, 1997, Corrected third printing of the
  1993 original.

\bibitem{Brill}
D.~R. Brill, J.~Louko, and P.~Peld{\'a}n, \emph{Thermodynamics of
  {$(3+1)$}-dimensional black holes with toroidal or higher genus horizons},
  Phys. Rev. D (3) \textbf{56} (1997), no.~6, 3600--3610.

\bibitem{CGS}
P.~T. Chru{\'s}ciel, G.~J. Galloway, and D.~Solis, \emph{Topological censorship
  for {K}aluza-{K}lein space-times}, Ann. Henri Poincar\'e \textbf{10} (2009),
  no.~5, 893--912.

\bibitem{CS}
P.T. Chru{\'s}ciel and W.~Simon, \emph{Towards the classification of static
  vacuum spacetimes with negative cosmological constant}, J. Math. Phys.
  \textbf{42} (2001), no.~4, 1779--1817.

\bibitem{CLR}
D.~Cooper, D.~D. Long, and A.~W. Reid, \emph{Essential closed surfaces in
  bounded {$3$}-manifolds}, J. Amer. Math. Soc. \textbf{10} (1997), no.~3,
  553--563.

\bibitem{Eich1}
M.~Eichmair, \emph{The {P}lateau problem for marginally outer trapped
  surfaces}, J. Differential Geom. \textbf{83} (2009), no.~3, 551--583.

\bibitem{Eich2}
\bysame, \emph{Existence, regularity, and properties of generalized apparent
  horizons}, Comm. Math. Phys. \textbf{294} (2010), no.~3, 745--760.

\bibitem{EGP}
M.~Eichmair, G.~J. Galloway, and D.~Pollack, \emph{Topological censorship from
  the initial data point of view}, J. Differential Geom. \textbf{95} (2013),
  no.~3, 389--405.

\bibitem{FSW}
J.~L. Friedman, K.~Schleich, and D.~M. Witt, \emph{Topological censorship},
  Phys. Rev. Lett. \textbf{71} (1993), no.~10, 1486--1489.

\bibitem{Gdoc}
G.~J. Galloway, \emph{On the topology of the domain of outer communication},
  Classical Quantum Gravity \textbf{12} (1995), no.~10, L99--L101.

\bibitem{GSWW}
G.~J. Galloway, K.~Schleich, D.~M. Witt, and E.~Woolgar, \emph{Topological
  censorship and higher genus black holes}, Phys. Rev. D (3) \textbf{60}
  (1999), no.~10, 104039, 11.

\bibitem{Hempel}
J.~Hempel, \emph{{$3$}-{M}anifolds}, Princeton University Press, Princeton, N.
  J., 1976, Ann. of Math. Studies, No. 86.

\bibitem{Hempel2}
\bysame, \emph{Residual finiteness for {$3$}-manifolds}, Combinatorial group
  theory and topology ({A}lta, {U}tah, 1984), Ann. of Math. Stud., vol. 111,
  Princeton Univ. Press, Princeton, NJ, 1987, pp.~379--396.

\bibitem{IMP}
J.~Isenberg, R.~Mazzeo, and D.~Pollack, \emph{On the topology of vacuum
  spacetimes}, Ann. Henri Poincar\'e \textbf{4} (2003), no.~2, 369--383.

\bibitem{lee-neves}
D.~Lee and A.~Neves, \emph{The {P}enrose inequality for asymptotically locally
  hyperbolic spaces with nonpositive mass}, 2013, arXiv:1310.3002.

\bibitem{Mann}
R.~B. Mann, \emph{Topological black holes---outside looking in}, Internal
  structure of black holes and spacetime singularities ({H}aifa, 1997), Ann.
  Israel Phys. Soc., vol.~13, IOP, Bristol, 1997, pp.~311--342.

\bibitem{Papakyriakopoulos}
C.~D. Papakyriakopoulos, \emph{Some problems on {$3$}-dimensional manifolds},
  Bull. Amer. Math. Soc. \textbf{64} (1958), 317--335.

\bibitem{Thom}
R.~Thom, \emph{Quelques propri\'et\'es globales des vari\'et\'es
  diff\'erentiables}, Comment. Math. Helv. \textbf{28} (1954), 17--86.

\bibitem{Wald}
R.~M. Wald, \emph{General relativity}, University of Chicago Press, Chicago,
  IL, 1984.


\end{thebibliography}

\providecommand{\bysame}{\leavevmode\hbox to3em{\hrulefill}\thinspace}
\providecommand{\MR}{\relax\ifhmode\unskip\space\fi MR }
\providecommand{\MRhref}[2]{%
  \href{http://www.ams.org/mathscinet-getitem?mr=#1}{#2}
}
\providecommand{\href}[2]{#2}

\end{document}